\newcommand{\TODO}{\textcolor{red}{TODO}}
\newcommand{\bk}[1]{| #1 \rangle}
\begin{document}

\title[Approximate $k$-Mismatch]{Simple Quantum Algorithm for Approximate $k$-Mismatch Problem}

\author{Ruhan Habib}
\email{ext.ruhan.habib@bracu.ac.bd}
\orcid{0009-0001-0911-9185}
\affiliation{%
  \institution{BRAC University}
  \city{Dhaka}
  \state{Dhaka}
  \country{Bangladesh}
}

\author{Shadman Shahriar}
\email{shadman.shahriar@bracu.ac.bd}
\orcid{0000-0001-8271-7272}
\affiliation{%
  \institution{BRAC University}
  \city{Dhaka}
  \state{Dhaka}
  \country{Bangladesh}
}

%
%
%
%
%
%

\renewcommand{\shortauthors}{Habib and Shahriar}

\begin{abstract}
    In the $k$-mismatch problem, given a pattern and a text of length $n$ and $m$ respectively,
    we have to find if the text has a sub-string with a Hamming distance of at most $k$ from the pattern.
    This has been studied in the classical setting since 1982 \cite{LANDAU1986239} and recently
    in the quantum computational setting by Jin and Nogler \cite{jin2022quantum}
    and Kociumaka, Nogler, and Wellnitz \cite{kociumaka2024quantum}.
    We provide a simple quantum algorithm that solves the problem in an approximate manner, given a parameter $\epsilon \in (0, 1]$.
    It returns an occurrence as a match only if it is a $\left(1+\epsilon\right)k$-mismatch.
    If it does not return any occurrence, then there is no $k$-mismatch. 
    This algorithm has a time (size) complexity of $\tilde{O}\left( \epsilon^{-1} \sqrt{\frac{mn}{k}} \right)$.
\end{abstract}

\begin{CCSXML}
<ccs2012>
   <concept>
       <concept_id>10010583.10010786.10010813.10011726</concept_id>
       <concept_desc>Hardware~Quantum computation</concept_desc>
       <concept_significance>500</concept_significance>
       </concept>
   <concept>
       <concept_id>10003752.10003809.10003636</concept_id>
       <concept_desc>Theory of computation~Approximation algorithms analysis</concept_desc>
       <concept_significance>300</concept_significance>
       </concept>
 </ccs2012>
\end{CCSXML}

\ccsdesc[500]{Hardware~Quantum computation}
\ccsdesc[300]{Theory of computation~Approximation algorithms analysis}

\keywords{$k$-mismatch, strings, quantum algorithms, approximation algorithms}

\received{\TODO}
\received[revised]{\TODO}
\received[accepted]{\TODO}

\maketitle

\section{Introduction}

String algorithms are of fundamental importance to Computer Science
both from a theoretical and practical point-of-views. They have
numerous applications in bio-informatics, data-mining and so on.
They are connected to both classical \cite{abboud2015quadratictime}
and quantum fine-grained complexity theory
\cite{legall_et_al.2022.97,akmal2021nearoptimal}.

The $k$-mismatch problem has been extensively studied in classical setting since 1982 \cite{LANDAU1986239},
but had not been studied through a quantum computational lens until 2022 by Jin and Nogler \cite{jin2022quantum}.
In that paper, they provided an $\tilde{O} \left( k\sqrt{n} \right)$-time
quantum algorithm and showed that the problem has a quantum query
lower-bound of $\Omega \left( \sqrt{kn} \right)$. They posed
the question of whether there is a quantum algorithm with better
query complexity than $\tilde{O} \left( k^{3/4} n^{1/2+o(1)} \right)$.
In 2024, Kociumaka, Nogler, and Wellnitz \cite{kociumaka2024quantum} found an algorithm
with optimal query complexity $\tilde{O}(\sqrt{kn})$ and time
complexity $\tilde{O} \left( \sqrt{n/m} (\sqrt{km} + k^2) \right)$.

In this paper, we show a simple quantum algorithm for an approximate variant of the $k$-mismatch problem:
given an approximation factor $\epsilon$, our algorithm has time complexity $\tilde{O}\left( \epsilon^{-1} \sqrt{\frac{mn}{k}} \right)$.
When $k = \omega\left({m^{2/3}} \epsilon^{-2/3}\right)$, our algorithm is faster than \cite{jin2022quantum}'s
quantum algorithm by a factor of $\omega\left(\sqrt{m}\right)$
and faster than that of \cite{kociumaka2024quantum} by a factor of $\omega(k)$.
A particular example is when $k$ is proportional to $m$.

\section{Notations and Basic Definitions}

We define $\mathbb{Z}$, $\mathbb{R}$, and $\mathbb{C}$ as usual:
set of integers, set of reals, and set of complex numbers.
We define $\mathbb{B} = \{ 0, 1\}$
and $\mathcal{B} = \mathbb{C}^2$.
Also, given a linear space $V$, we define $\mathbf{U}(V)$ to be the space of
unitary operators acting on $V$.

\begin{definition}[Intervals]
   Given two integers $L \le R$, we define
   $[L..R] = \{ x \in \mathbb{Z} : L \le x \le R  \}$,
      $[L..R) = [L..R] \setminus \{ R \}$,
      $(L..R] = [L..R] \setminus \{ L \}$,
      $(L..R) = [L..R] \setminus \{ L, R \}$.
   Given two real numbers $L \le R$, we define
      $[L, R] = \{x \in \mathbb{R} : L \le x \le R \}$, 
      $[L, R) = [L, R] \setminus \{R \}$,
      $(L, R] = [L, R] \setminus \{ L \}$,
      $(L, R) = [L, R] \setminus \{ L, R \}$.
\end{definition}

The $\tilde{O}$ (soft-oh) notation is used in place of the $O$ (big-oh) notation
to ignore polylogarithmic factors (for example, we can write $\tilde{O}\left(n^2\right)$
instead of $O\left(n^2 \log{n}\right)$).
Also note that we often use ``time complexity'' where we actually mean ``size complexity''.

%

For string and array indexing, we use $0$-based indexing. That is, the first letter
of a string $S$ is given by $S_0$ or $S[0]$. Furthermore, given integers $i$ and $j$,
$S[i..j]$ and $S[i..j)$ denotes substrings of $S$ starting from the $i$-th element (in $0$-based indexing)
to the $j$-th element or $j-1$th element respectively.
Also, $|S|$ denote the length of $S$.

\begin{definition}[Hamming Distance]
   Given two strings $A, B \in \Sigma^*$ for some
   alphabet $\Sigma$, we define $\delta_H(A,B)$ as follows:

   \begin{displaymath}
        \delta_H(A, B) = \begin{cases}
            \left| \{ i \in [ 0 .. |A| ) : A[i] \neq B[i] \} \right| & |A| = |B| \\
            \infty & |A| \neq |B|
        \end{cases}
    \end{displaymath}
\end{definition}

\section{Problem Statement and Our Contribution}

In the $k$-mismatch problem, the task is to find given a text and a pattern,
any substring of the text such that its Hamming distance with the pattern is
less than or equal to $k$. It is a ``fault-tolerant'' version of the regular
string matching problem.

\begin{definition}[$k$-mismatch Problem]
   An algorithm decides
   the \emph{$k$-mismatch matching problem} if, given
   oracle access
   a text string $T$ of length $n$, a pattern string $P$ of length
   $m$, and a positive integer $k$,
   the algorithm reports the existence of $i \in [0..n-m]$ such that
   $\delta_H(T[i..i+m), P) \le k$. 
   We also say that a substring $T'$ of $T$ is an $r$-mismatch of $P$ if
   $\delta_H\left(T', P\right) \le r$.
   A quantum algorithm decides the problem if, given $T$, $P$, and $k$ as defined above,
   it outputs a correct result (upon measurement) with probability of at least $2/3$.
\end{definition}

We solved an approximate version of this problem. Given a parameter $\epsilon \in (0, 1]$,
our algorithm is guaranteed (with probability of at least $2/3$) to return the
location of a $(1+\epsilon)k$-mismatch if there exists any $k$-mismatch. If there is
no $k$-mismatch, it may return the location of a $(1+\epsilon)k$-mismatch.
In any case, it will not (with probability of at least $2/3$) return the location
of any substring $T'$ with $\delta_H(T', P) > \left(1 + \epsilon\right)k$.

We also assume that the alphabet size of the strings are polynomially bounded:
each element of the text or pattern requires only polylogarithmically many
bits (or qubits) to be represented.

More formally, the following is the main result of our paper:
\begin{theorem}
   There exists a quantum algorithm that, given oracle access to 
   a pattern $P$ of length $m$ and
   a text $T$ of length $n$, an integer threshold $k > 0$, and
   $\epsilon \in (0, 1]$,
   such that:
   \begin{itemize}
      \item if there exists an $j \in [0..n-m]$ such that $\delta_H(T[j..j+m), P) \le k$, then 
         the algorithm, upon measurement, 
         outputs $(j', 1)$ for some $j' \in [0..n-m]$ satisfying 
         $\delta_H(T[j'..j'+m), P) \le (1 + \epsilon)k$ with a probability of at least $2/3$;
      \item if, for all $j \in [0..n-m]$, we have $\delta_H(T[j..j+m), P) > (1 + \epsilon)k$, then
         the algorithm, upon measurement, outputs $(j', 0)$ for some $j' \in [0..2n-1]$ 
         with probability of at least $2/3$.
   \end{itemize}
   This algorithm has time complexity $\tilde{O} \left( \epsilon^{-1} \sqrt{\frac{mn}{k}} \right)$ 
   (assuming that $P$ and $T$ can be accessed in $\tilde{O}\left(1\right)$ time).
\end{theorem}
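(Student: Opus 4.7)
The plan is to combine Grover search over the $n-m+1$ candidate alignments with quantum amplitude estimation as a per-alignment Hamming-distance approximator. For a fixed $i \in [0..n-m]$, let $d_i = \delta_H(T[i..i+m), P)$ and let $f_i \colon [0..m) \to \mathbb{B}$ be the indicator $f_i(j) = [T[i+j] \ne P[j]]$, which is reversibly computable in $\tilde{O}(1)$ time given oracle access to $T$ and $P$. Quantum amplitude estimation on $f_i$ with $Q$ iterations produces $\tilde{d}_i$ satisfying $|\tilde{d}_i - d_i| = O(\sqrt{d_i m}/Q + m/Q^2)$. Choosing $Q = \Theta(\epsilon^{-1}\sqrt{m/k})$ yields additive error $O(\epsilon k)$ whenever $d_i = O(k)$, and pushes $\tilde{d}_i$ safely above the threshold $(1+\epsilon/2)k$ whenever $d_i > (1+\epsilon)k$. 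Taking the median of $O(\log n)$ independent runs drives the failure probability below $1/n^{O(1)}$ at only a polylogarithmic overhead.

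I would then encapsulate this as a reversible unitary $U_g$ computing the Boolean predicate $g(i) = [\tilde{d}_i \le (1+\epsilon/2)k]$ into an ancilla, uncomputing the estimation register before returning. By the previous step, with probability $1 - n^{-\Omega(1)}$ we have $g(i)=1$ whenever $d_i \le k$ and $g(i)=0$ whenever $d_i > (1+\epsilon)k$. Feeding $U_g$ into the exponential-search (BBHT) variant of Grover's algorithm over the $n-m+1$ alignments completes the construction: the algorithm returns $(i,1)$ when a marked index $i$ is found, and $(0,0)$ otherwise. Grover issues $O(\sqrt{n})$ oracle calls, each costing $\tilde{O}(\epsilon^{-1}\sqrt{m/k})$, giving the claimed $\tilde{O}(\epsilon^{-1}\sqrt{mn/k})$ bound.

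The main obstacle is controlling error propagation when the Grover oracle is itself a coherent subroutine with small but nonzero error probability: the wrong branches of amplitude estimation appear as residual garbage inside Grover's iterations and can spoil the interference that Grover relies on. I would handle this by amplifying the per-call success probability via median-of-logarithmically-many independent estimates, so that a union bound over the $O(\sqrt{n})$ oracle invocations still yields constant overall success probability, at the cost of a logarithmic factor absorbed into the $\tilde{O}$. A secondary technical point is splitting the amplitude-estimation analysis into the near-threshold regime $d_i = O(k)$, where the additive bound $O(\sqrt{d_i m}/Q)$ is what delivers the $O(\epsilon k)$ target, and the far-above regime $d_i \gg k$, where one instead checks directly that $\tilde{d}_i$ exceeds the threshold with high probability even though the raw error bound degrades.
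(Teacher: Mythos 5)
Your proposal is correct and essentially mirrors the paper's construction: the paper likewise uses quantum counting (amplitude estimation with $M = O(\epsilon^{-1}\sqrt{m/k})$ iterations) to build a bounded-error Hamming-distance decider, boosts its success probability to $N^{-\Omega(1)}$ via majority voting, and then searches over the $O(n)$ alignments with a QSearch variant ("Weak Search") that tolerates the decider's neutral zone $k < d_i \le (1+\epsilon)k$. The per-call boosting plus union-bound-over-Grover-iterations argument you sketch to control coherent error propagation is exactly the mechanism formalized in the paper's Weak Search lemma.
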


The quantum algorithm outputs, upon measurement, a pair $(j,b)$. If $b = 1$, then
the algorithm reports $T[j..j+m)$ to be an $(1+\epsilon)k$-mismatch.
Otherwise, the algorithm reports that it did not find any $(1+\epsilon)k$-mismatch:
the value of $j$ does not matter in this case.

\section{Necessary Results}

The principle of deferred measurement is implicitly used throughout this paper.
Aside from that, the following results are also used.

\begin{theorem}[Amplitude Amplification \cite{Brassard_2002}]
    \label{thm:amplitude_amplification}
   There exists a quantum algorith $\mathbf{QSearch}$
   with the following property. Let $\mathcal{A}$ be any quantum algorithm
   (that uses no measurements), and let $\chi : \mathbb{B}^n \rightarrow \mathbb{B}$
   be a Boolean computable function. Also suppose that we are given oracle
   access or a quantum circuit for computing $\chi$. Let $a$
   denote the success probability of $\mathcal{A}$ (that is, the probability of $\mathcal{A}$,
   upon measurement, outputting $y$ such that $\chi(y) = 1$). Let $T$ be
   a positive parameter such that $a = 0$ or $T \ge 1/a$.
   If $a = 0$ then $\mathbf{QSearch}$ reports no answer.
   Otherwise, $\mathbf{QSearch}$ reports an answer in $O\left(\sqrt{T}\right)$ 
   applications of $\mathcal{A}$ and $\mathcal{A}^{-1}$
   with probability greater than or equal to $2/3$.
\end{theorem}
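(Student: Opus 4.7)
The plan is to prove the theorem via the standard two-dimensional geometric analysis underlying Grover's algorithm, adapted to an arbitrary state-preparation subroutine $\mathcal{A}$. First, I would decompose the initial state $\mathcal{A}|0\rangle$ into ``good'' and ``bad'' components: letting $\Pi$ be the projector onto computational basis states $y$ with $\chi(y)=1$, we can write $\mathcal{A}|0\rangle = \sin(\theta)|g\rangle + \cos(\theta)|b\rangle$, where $|g\rangle$ and $|b\rangle$ are unit vectors with $\Pi|g\rangle = |g\rangle$, $\Pi|b\rangle = 0$, and $\sin^2(\theta) = a$. All subsequent analysis takes place on the two-dimensional real subspace spanned by these two vectors.

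Next, I would introduce the Grover iterate $Q = -\mathcal{A}\,S_0\,\mathcal{A}^{-1}\,S_\chi$, where $S_\chi = I - 2\Pi$ flips the sign of good computational basis states (implementable with one query to $\chi$ and an ancilla), and $S_0 = I - 2|0\rangle\langle 0|$ reflects about the $|0\rangle$ axis. A direct matrix computation in the $\{|g\rangle,|b\rangle\}$ basis shows that $Q$ acts as rotation by angle $2\theta$. Hence $Q^j \mathcal{A}|0\rangle = \sin((2j+1)\theta)|g\rangle + \cos((2j+1)\theta)|b\rangle$, so a computational-basis measurement after $j$ iterations yields a good outcome with probability $\sin^2((2j+1)\theta)$, at a cost of $O(j)$ applications of $\mathcal{A}$, $\mathcal{A}^{-1}$, and $\chi$.

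The key obstacle is that $\theta$ is not known in advance --- only the promise $T \ge 1/a$. I would follow the standard randomized exponential-doubling strategy of Brassard, Hoyer, Mosca, and Tapp: let $M_0 = 1$ and double $M_{\ell+1} = 2 M_\ell$ until $M_\ell$ first reaches $\lceil\sqrt{T}\rceil$. At each level $\ell$, draw $j_\ell$ uniformly from $\{0, 1, \dots, M_\ell - 1\}$, prepare $Q^{j_\ell}\mathcal{A}|0\rangle$, measure, and test $\chi$ on the outcome, returning immediately if $\chi$ evaluates to $1$. The core technical step is to verify that the average of $\sin^2((2j+1)\theta)$ over $j$ uniform in $\{0, \dots, M-1\}$ is bounded below by a constant (for concreteness, $1/4$) once $M \ge 1/\sin(2\theta)$; this follows from the closed form $\tfrac{1}{2} - \tfrac{\sin(4M\theta)}{4M\sin(2\theta)}$ together with $|\sin(4M\theta)| \le 1$. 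Since the promise $T \ge 1/a = 1/\sin^2\theta$ gives $\sqrt{T} = \Omega(1/\sin(2\theta))$, the success probability becomes $\Omega(1)$ once the loop reaches $M_\ell = \Theta(\sqrt{T})$.

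The total cost across all levels is $\sum_\ell O(M_\ell) = O(\sqrt{T})$ applications of $\mathcal{A}$ and $\mathcal{A}^{-1}$ by the geometric series, and a constant number of independent repetitions of the whole procedure boosts the overall success probability above $2/3$. The $a=0$ case is immediate: no measurement outcome can satisfy $\chi$, so the algorithm can safely report ``no answer''. The only delicate part of the whole argument is the trigonometric averaging lemma above; everything else reduces to elementary linear algebra, one-qubit rotation analysis, and geometric-series accounting.
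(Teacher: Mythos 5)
Your proposal reconstructs the proof of Brassard--H{\o}yer--Mosca--Tapp's \textbf{QSearch}; note that the paper itself gives no proof of this theorem --- it imports it from \cite{Brassard_2002}, and its later Weak Search argument merely transcribes $\mathbf{QSearch'}$ (Algorithm \ref{alg:qsearch1}) and defers to that reference's analysis. So reproducing the BHMT argument is exactly the right move, and your 2D rotation decomposition, the iterate $Q=-\mathcal{A}S_0\mathcal{A}^{-1}S_\chi$, the exponential-doubling schedule, and the averaging identity $\frac{1}{M}\sum_{j=0}^{M-1}\sin^2\bigl((2j+1)\theta\bigr)=\frac12-\frac{\sin(4M\theta)}{4M\sin(2\theta)}$ are all the standard and correct ingredients.

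There is, however, one genuine gap in the step where you convert the promise into the hypothesis of the averaging lemma. You claim that $T\ge 1/a=1/\sin^2\theta$ gives $\sqrt{T}=\Omega\bigl(1/\sin(2\theta)\bigr)$. This is false when $a$ is close to $1$: since $\sin(2\theta)=2\sin\theta\cos\theta=2\sqrt{a(1-a)}$, the quantity $1/\sin(2\theta)$ blows up as $a\to 1$ while $1/\sqrt{a}$ (and hence possibly $\sqrt{T}$) stays near $1$, so the condition $M\ge 1/\sin(2\theta)$ needed for your lower bound of $1/4$ may never be met within the $\lceil\sqrt{T}\rceil$ budget. The inference is only valid when $a$ is bounded away from $1$ (e.g.\ $a\le 1/2$, where $\cos\theta\ge 1/\sqrt2$ gives $1/\sin(2\theta)\le 1/\sqrt{2a}\le\sqrt{T}$). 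BHMT --- and the paper's Algorithm \ref{alg:qsearch1}, lines 6--9 --- handle the complementary regime by measuring $\mathcal{A}|0\rangle$ directly in every round, which alone succeeds with probability $a\ge 3/4$ there; alternatively, you can salvage your version by a short case analysis showing that for $a\ge 1/2$ the exact average $\frac12-\frac{\sin(4M\theta)}{4M\sin(2\theta)}$ is already at least $1/4$ for \emph{every} $M\ge 1$ (split on whether $4M(\pi/2-\theta)\le\pi$; otherwise bound $\sin(2(\pi/2-\theta))\ge \frac{4(\pi/2-\theta)}{\pi}$). Without one of these fixes, your key step fails precisely in the high-success-probability regime that the theorem statement permits; with it, the rest of your accounting (geometric series for the cost, constant repetitions for $2/3$, and the trivial $a=0$ case) goes through.
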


\begin{theorem}[Counting \cite{Brassard_2002}]
   \label{thm:qae_counting}
   Suppose that we are given positive integers $M$ and $k$, and a
   boolean (computable) function $f : [0..N-1] \rightarrow \mathbb{B}$,
   where $N = 2^n$ for some integer $n \ge 1$.
   There is a quantum algorithm $\mathbf{Count}(f, M)$ that
   outputs an estimate $t'$ to $t = \left| f^{-1}(1) \right|$
   such that 
   \begin{align*}
      \left| t' - t \right|
      \le
      2\pi k \frac{\sqrt{t(N - t)}}{M} + \pi^2 k^2 \frac{N}{M^2}
   \end{align*}
   with probability greater than $1 - \frac{1}{2(k-1)}$ for $k > 1$.
   Furthermore, this algorithm uses $f$ $\Theta(M)$ times.
\end{theorem}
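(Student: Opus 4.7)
The plan is to build a per-alignment quantum checker using $\mathbf{Count}$ and then search over all $n-m+1$ candidate shifts using $\mathbf{QSearch}$. For each shift $i \in [0..n-m]$, I would define $f_i : [0..m) \to \mathbb{B}$ by setting $f_i(j) = 1$ iff $T[i+j] \neq P[j]$, so that $|f_i^{-1}(1)| = \delta_H(T[i..i+m), P)$. The checker $\chi$ then invokes $\mathbf{Count}(f_i, M)$ with $M = \Theta(\epsilon^{-1}\sqrt{m/k})$ to produce an estimate $\tilde{t}$ and returns $1$ iff $\tilde{t} \le (1+\epsilon/2)k$. To make $\chi$ effectively deterministic for use inside amplitude amplification, I would amplify its success probability to $1 - 1/\mathrm{poly}(n)$ via $O(\log n)$ independent repetitions and a majority vote; the principle of deferred measurement, invoked throughout the paper, allows this amplified checker to be composed coherently with $\mathbf{QSearch}$.

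Once $\chi$ is in place, let $\mathcal{A}$ be the trivial algorithm that prepares the uniform superposition over $[0..n-m]$. Plugging $(\mathcal{A}, \chi)$ into Theorem~\ref{thm:amplitude_amplification} with $T = n$ yields, whenever any shift is accepted by $\chi$, a shift $i'$ with $\chi(i') = 1$ after $O(\sqrt{n})$ applications of $\mathcal{A}$, $\mathcal{A}^{-1}$, and $\chi$. Because the true success probability $a$ is unknown in advance, I would wrap this in the standard exponential-search loop over $T = 1, 2, 4, \dots, n$ at an additional logarithmic cost, returning $(i', 1)$ on the first iteration whose output is re-verified by $\chi$, and $(0, 0)$ otherwise. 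The two output guarantees in the theorem then follow immediately from correctness of $\chi$, and the total cost is $\tilde{O}(\sqrt{n}) \cdot \tilde{O}(\epsilon^{-1}\sqrt{m/k}) = \tilde{O}(\epsilon^{-1}\sqrt{mn/k})$.

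The main obstacle is justifying correctness of $\chi$ uniformly over \emph{every} alignment, including those whose true mismatch count $t$ is far from $k$. The error bound in Theorem~\ref{thm:qae_counting} scales with $\sqrt{t(m-t)}/M$, which peaks at $t = m/2$ and can vastly exceed $\epsilon k$; one must rule out the pathology that $\tilde{t}$ underestimates a huge $t$ down below the decision threshold $(1+\epsilon/2)k$. The fix is a case analysis on $u := t/k$: for $t \le k$, substituting $M = C\epsilon^{-1}\sqrt{m/k}$ makes the error at most $\epsilon k/4$ provided $C$ is a sufficiently large absolute constant, so $\tilde{t} < (1+\epsilon/2)k$; for $t > (1+\epsilon)k$, the margin $t - (1+\epsilon/2)k$ grows linearly in $u$ whereas the error grows only as $\sqrt{u}$, so $\tilde{t}$ stays strictly above the threshold. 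A union bound over all $n$ shifts using the amplified checker's $1/\mathrm{poly}(n)$ failure probability completes the argument.
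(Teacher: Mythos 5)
Your proposal does not prove the statement in question. The statement here is the quantum counting theorem itself (Theorem~\ref{thm:qae_counting}, cited from \cite{Brassard_2002}): the existence of an algorithm $\mathbf{Count}(f,M)$ whose estimate $t'$ of $t=|f^{-1}(1)|$ satisfies $|t'-t| \le 2\pi k \sqrt{t(N-t)}/M + \pi^2 k^2 N/M^2$ with probability greater than $1-\tfrac{1}{2(k-1)}$, using $\Theta(M)$ evaluations of $f$. A proof of this must construct and analyze the counting procedure itself: encode $t/N$ as an amplitude $\sin^2\theta$ of the uniform superposition split by $f$, run phase estimation on the Grover iterate $\mathbf{Q}$ (whose relevant eigenvalues are $e^{\pm 2i\theta}$) with an $M$-point Fourier register, and convert the phase-estimation error guarantee into the stated additive bound on $t'$ via the Lipschitz-type estimate $|N\sin^2\tilde\theta - N\sin^2\theta| \le 2\pi k\sqrt{t(N-t)}/M + \pi^2k^2 N/M^2$, together with the $1-\tfrac{1}{2(k-1)}$ success probability of estimating the phase to within $k$ units. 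None of this appears in your write-up. (In the paper this theorem is simply imported from \cite{Brassard_2002} without proof, so the expected content is precisely that amplitude-estimation analysis.)

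What you wrote instead is a sketch of the paper's \emph{main} result (Theorem~\ref{thm:approximate_epsilon}): a per-alignment Hamming-distance checker built from $\mathbf{Count}$, boosted by majority vote, and searched over shifts with amplitude amplification. That argument explicitly \emph{invokes} the error bound of Theorem~\ref{thm:qae_counting} ("the error bound in Theorem~\ref{thm:qae_counting} scales with $\sqrt{t(m-t)}/M$\,''), so as a proof of the counting theorem it is circular, and as a submission for this statement it proves the wrong thing. The material itself is reasonable as an outline of Theorems~\ref{thm:epsilon_k_distance} and~\ref{thm:approximate_epsilon} (it mirrors the paper's approach of a case analysis $t \le k$ versus $t > (1+\epsilon)k$ around the threshold $(1+\epsilon/2)k$, combined with a search wrapper), but it leaves the actual target --- the construction and error analysis of $\mathbf{Count}$ --- entirely unaddressed.
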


\section{Our Result}

\subsection{Weak Search}

The Weak Search algorithm is heavily inspired by \cite{kociumaka2024quantum}'s 
Bounded-Error Quantum Seaarch with Neutral Inputs. In fact, the only difference is that
our assumption about the provided oracle is slightly more general.
It can also be though of as a simple application of Theorem \ref{thm:amplitude_amplification}.

To put it simply, suppose that we have access to some unitary circuit that, upon measurement,
outputs YES with probability of at least $2/3$ for some inputs (the positive inputs), outputs
NO with probability of at least $2/3$ for some inputs (the negative inputs),
and we do not necessarily know
how it behaves for the rest of the inputs (the neutral inputs).
The Weak Search algorithm finds, using $\tilde{O}\left(\sqrt{N} \right)$ queries,
either a positive input or a neutral input with probability of at least $2/3$
if any positive input exists. In any case, it reports a negative input with probability
of at most $1/3$.

\begin{theorem}[Weak Search]
    Let $n \ge 1$ be an integer and let $N = 2^n$. Let $F : [0..N-1] \rightarrow \{ 0, 1, 2 \}$
    be a function. Let $\mathcal{D}$ be a quantum circuit such that for any $j \in [0..N-1]$,
    if $F(j) = 0$ or $F(j) = 1$ then $| \langle j,  F(j), 0^x | \mathcal{D} | j, 0, 0^x \rangle | \ge 2/3$,
    where $x$ is the number of ancilliary qubits used by $\mathcal{D}$. Then
    there is a quantum circuit $\mathcal{B}$ such that $F^{-1}(\{ 1 \}) = \varnothing$
    or $\sum_{j \in F^{-1}(\{1,2\})} | \langle j, 1, 0^y | \mathcal{B} | 0, 0, 0^y\rangle|^2 \ge \frac{2}{3}$.
    In any case, $\sum_{j \in F^{-1}(\{0\})} |\langle j, 1, 0^y | \mathcal{B} | 0, 0, 0^y \rangle |^2 \le \frac{1}{3}$.
    Here, $y$ is the number of ancilliary qubits used by $\mathcal{B}$. Furthermore,
    $\mathcal{B}$ queries $\mathcal{D}$ at most $\tilde{O}\left(\sqrt{N} \right)$ times. And $\mathcal{B}$ increases
    the circuit size of $\mathcal{D}$ by a factor of $\tilde{O}\left(\sqrt{N}\right)$.
\end{theorem}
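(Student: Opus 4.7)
The plan is to apply QSearch (Theorem~\ref{thm:amplitude_amplification}) to a boosted version of $\mathcal{D}$ acting on a uniform superposition over indices. Since the raw guarantee $|\langle j,F(j),0^x|\mathcal{D}|j,0,0^x\rangle|\ge 2/3$ only bounds the probability of the correct answer by $4/9$, applying QSearch directly on the predicate ``answer qubit equals $1$'' would let the $\Theta(N)$ potential negative indices contribute too much mass to the $\chi=1$ subspace; boosting sharpens this to error $1/\mathrm{poly}(N)$ per non-neutral index at polylogarithmic cost. Concretely, I would first build $\mathcal{D}'$ from $\mathcal{D}$ via standard Chernoff-style quantum amplification: run $r=\Theta(\log N)$ copies of $\mathcal{D}$---sharing $|j\rangle$ as a control, using fresh output/ancilla registers each time---and reversibly compute the majority vote of the $r$ output bits into a dedicated answer qubit, so that for every $j$ with $F(j)\in\{0,1\}$ the probability that this answer qubit equals $F(j)$ is at least $1-1/(100N)$. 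Then define $\mathcal{A}$ by prepending Hadamards on the index register to $\mathcal{D}'$, let $\chi$ be ``answer qubit equals $1$'', and invoke QSearch on $(\mathcal{A},\chi)$ with $T=2N$, wrapping the output with a standard copy-and-reverse clean-up step so that accepting runs have outputs of the required form $|j,1,0^y\rangle$.

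For correctness, write $p'(j)=\Pr[\text{answer}=1\mid j]$ and $a=N^{-1}\sum_j p'(j)$. If $F^{-1}(\{1\})\neq\varnothing$ then $a\ge(1-1/(100N))/N\ge 1/T$, so QSearch accepts with probability $\ge 2/3$; conditioning on acceptance, the probability of a negative index is $\tfrac{1}{Na}\sum_{j\in F^{-1}(\{0\})}p'(j)\le \tfrac{1}{Na}\cdot\tfrac{1}{100}\le \tfrac{1}{50}$. When no positive input exists, the standard Grover-iteration amplitude bound shows that the probability of acceptance itself is at most $O(Ta)\le O(1/50)$, so the chance of outputting any negative index stays below $1/3$. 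A constant number of independent QSearch invocations, keeping the first accepted output, then lifts the positive-case acceptance probability comfortably above $2/3$ while preserving the bound on negative outputs. Total query count and circuit-size blow-up are both $\tilde O(\sqrt N)$, matching the claim.

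The main obstacle is the boosting step: to amplify cleanly I need the $r$ copies of $\mathcal{D}$ to share $|j\rangle$ as a control and thereby produce nearly independent samples of $F(j)$; if $\mathcal{D}$ disturbs $|j\rangle$, one would have to fall back on a Marriott--Watrous style amplification and carry out the analysis at the amplitude rather than the probability level. A secondary subtlety is that Theorem~\ref{thm:amplitude_amplification} only guarantees good behaviour when $T\ge 1/a$, whereas in the no-positive case $a$ may be smaller than $1/T$; there one has to unpack the Grover iteration directly to bound the amplitude on the $\chi=1$ subspace before concluding that false positives remain rare.
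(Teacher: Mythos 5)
Your approach---majority-vote boosting of $\mathcal{D}$ followed by amplitude amplification over a uniform superposition of indices, repeated a constant number of times---is the same as the paper's; the paper packages this as $\mathbf{Weak\_Search\_Auxiliary}$ fed into its own implementation $\mathbf{QSearch'}$, while you invoke the black-box QSearch interface, but the decomposition is identical.

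There is, however, a genuine gap in your handling of the no-positive case. You write that when $F^{-1}(\{1\})=\varnothing$, ``the probability of acceptance itself is at most $O(Ta)\le O(1/50)$,'' but $a=N^{-1}\sum_j p'(j)$ also collects the contribution of \emph{neutral} indices ($F(j)=2$), for which $p'(j)$ can be as large as $1$. So with no positives but many agreeable neutrals one can have $a$ close to $1$ and hence $Ta\gg 1$; the acceptance probability itself is therefore not what you should be bounding. The quantity the theorem requires you to control is the probability of outputting a \emph{negative} index, and the clean fix is to combine your two partial arguments into one: at any measurement after $t$ Grover iterations, $\Pr[\text{output negative}] \le \sin^2\bigl((2t+1)\theta\bigr)\cdot a_{\mathrm{FP}}/a \le (2t+1)^2\, a_{\mathrm{FP}}$ where $a_{\mathrm{FP}}=N^{-1}\sum_{j\in F^{-1}(\{0\})}p'(j)\le 1/(100N)$; summing this over the geometrically growing iteration counts gives $O(T\,a_{\mathrm{FP}})=O(1/50)$, uniformly in $a$, which is essentially the bound you wanted. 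The paper sidesteps this by boosting to the stronger error $N^{-\lambda}$ with a fixed $\lambda\ge 4$ and running a crude union bound over all $4L=O(\log N)$ measurements of $\mathbf{QSearch'}$, which gives more slack in the constants; your $1/(100N)$ target is at the edge of what works and would require careful constant tracking against the hidden constant coming out of QSearch. Finally, your worry about needing Marriott--Watrous if $\mathcal{D}$ disturbs $\ket{j}$ is moot here: the hypothesis of the theorem already asserts $|\langle j,F(j),0^x|\mathcal{D}|j,0,0^x\rangle|\ge 2/3$, i.e.\ $\mathcal{D}$ returns the index register and ancillas (approximately) in their original state, so plain controlled repetition plus majority vote, as you propose, is what the paper uses.
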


\begin{proof}
   Simply speaking, we are just using $\mathbf{QSearch}$
   on $\mathbf{Weak\_Search\_Auxiliary}$ (Algorithm \ref{alg:qtsearch0}), which samples $j \in [0..N-1]$
   and applies a boosted (decreasing the failure probability to $N^{-\lambda}$ for some $\lambda$ to be defined later)
   version of $\mathcal{D}$ on it. We call Algorithm \ref{alg:qtsearch0}'s output $(j,b)$, upon measurement,
   to be ``good'' or ``successful'' if $b = 1$.

   Suppose that $\mathcal{A} \in \mathbf{U}\left( \mathcal{B}^{\otimes m} \right)$
   and $\chi : \mathbb{B}^m \rightarrow \mathbb{B}$.
   Define $\mathbf{S}_0 \in \mathbf{U}\left( \mathcal{B}^{\otimes m} \right)$
   as follows: for all $x \in \mathbb{B}^n$,  if $x = 0$ then $\mathbf{S}_0\bk{x} = -\bk{x}$ and 
   $\mathbf{S}_0 \bk{x} = \bk{x}$ otherwise. Similarly, for all $x \in \mathbb{B}^n$,
   we define $\mathbf{S}_\chi \bk{x} = (-1)^{\chi(x)} \bk{x}$.
   Now, we define $\mathbf{Q}\left(\mathcal{A}, \chi \right) = -\mathcal{A}\mathbf{S}_0 \mathcal{A}^{-1}\mathbf{S}_\chi$.

   First, we write down the $\mathbf{QSearch}$' (Algorithm \ref{alg:qsearch1}), which is 
   just the Quantum Amplitude Amplification Algorithm of \cite{Brassard_2002}.

   Please see \cite{Brassard_2002}'s analysis of Theorem \ref{thm:amplitude_amplification},
   as our algorithm and analysis depends on theirs.

    \begin{algorithm}
        \caption{$\mathbf{QSearch'} (\mathcal{A}, \chi, T)$}\label{alg:qsearch1}
        \begin{algorithmic}[1]
            \STATE Set $l \leftarrow 0, t \leftarrow 0, f \leftarrow 0, o \leftarrow 0$ and let $c$ be any constant such that $1 < c < 2$.
            \STATE Set constant $L \leftarrow \max\left(C, \left\lceil \log{\left(4\alpha \sqrt{T} \right)}/\log{c}\right\rceil\right)$. \label{pseudo:qsearch1_L}
            \WHILE{$l < L$ and $f = 0$}
                \STATE Set $l \leftarrow l + 1$ and set $M \leftarrow \lceil c^l \rceil$.
                \STATE Set $t \rightarrow t + 1$.
                \STATE Apply $\mathcal{A}$ on the intial state of appropriate size $\bk{0}$.
                \STATE Measure the system, let $\bk{z,b}$ denote the outcome of the register on which $\mathcal{A}$ acts. \label{pseudo:qsearch1_z0}
                \IF{$\chi(z,b) = 1$}
                    \STATE Set $o \leftarrow (z, b)$ and $f \leftarrow 1$
                \ELSE
                \STATE Initialize a register of appropriate size to $\bk{\Psi} = \mathcal{A}\bk{0}$.
                    \STATE Pick an integer $j$ between $1$ and $M$ uniformly at random.
                    \STATE Set $t \leftarrow t + j$.
                    \STATE Apply $\mathbf{Q}(\mathcal{A}, \chi)^j$ to the register.
                    \STATE Measure the register, let $\bk{z,b}$ denote the outcome. \label{pseudo:qsearch1_z1}
                    \IF{$\chi(z,b) = 1$} 
                    \STATE Set $o \leftarrow (z, b)$ and $f \leftarrow 1$.
                    \ENDIF
                \ENDIF
            \ENDWHILE
            \STATE \RETURN $(o, f)$.
        \end{algorithmic}
    \end{algorithm}

    Let $a$ denote the success probability of $\mathcal{A}$.
    Let $T_f$ denote the random variable denoting the final value of $t$ in Algorithm
    \ref{alg:qsearch1} if we ignore the condition $l < L$ in the while loop.
    It can be shown (and has been shown in \cite{Brassard_2002}) that 
    if $a \ge 3/4$, $\mathbb{E}\left[T_f\right] \le C/3$ for some positive integer $C$.
    And if $0 < a < 3/4$, then  $\mathbb{E}\left[T_f\right] \ge \frac{\alpha}{4\sqrt{a}}$ 
    for some real $\alpha > 0$. This means that $\mathbb{P}\left[ T_f \le \alpha \sqrt{T} \right] \ge 3/4$.
    Furthermore, let $\gamma \in \mathbb{N}$ be a fixed constant
    such that $\mathbf{QSearch'}$ uses at most $\gamma \sqrt{N}$ applications
    of $\mathcal{A}$. And let $\lambda \ge 4$ be a fixed integer such that 
    $4\gamma 2^{-\lambda + \frac{1}{2}} \le \frac{1}{9}$.

    Let $\mathbf{Success\_Boosting}(\mathcal{A}, r, x)$ denote boosting
    the success of $\mathcal{A}$ to $1 - N^{-r}$ on input $x$, assuming of course
    that $\mathcal{A}$ is a ``decision'' quantum algorithm that outputs YES or NO
    correctly with probability of at least $2/3$. We can do this by simply
    computing $\mathcal{A}$ multiple times and taking a majority vote.

    \begin{algorithm}
        \caption{$\mathbf{Weak\_Search\_Auxiliary} (\mathcal{D}, N)$}\label{alg:qtsearch0}
        \begin{algorithmic}[1]
            \REQUIRE $N = 2^n$ for some integer $n \ge 1$.
            \STATE Sample $j$ uniformly randomly from $[0..N-1]$.
            \STATE Set $b \leftarrow \mathbf{Success\_Boosting}(\mathcal{D}, \lambda, j)$.
            \RETURN $(j, b)$.
        \end{algorithmic}
    \end{algorithm}


   Note that by replacing line 1 of of Algorithm \ref{alg:qtsearch0},
   with some other quantum algorithm , we can get a generalization of Theorem \ref{thm:amplitude_amplification}.

   For $n \in \mathbb{N}$, let $N = 2^n$ and define 
   $\chi_N : [0..N-1] \times \mathbb{B} \rightarrow \mathbb{B}$ by
   \begin{align*}
      \chi_N(j, b) = b &&\forall j \in [0..N-1], b \in \mathbb{B}.
   \end{align*}

   We are going to apply $\mathbf{QSearch'}$ on $\mathbf{Weak\_Search\_Auxiliary}$.
   An output $(j,b)$ of $\mathbf{Weak\_Search\_Auxiliary}$ is considered ``good''
   if $\chi_N(j,b) = b = 1$.

    \begin{algorithm}
        \caption{$\mathbf{Weak\_Search} (\mathcal{D}, N)$}\label{alg:qtsearch}
        \begin{algorithmic}[1]
            \REQUIRE $N = 2^n$ for some integer $n \ge 1$.
            \FOR{$t \in [0..1]$}
            \STATE Set $((j, b), f) \leftarrow \mathbf{QSearch'}\left(\mathbf{Weak\_Search\_Auxiliary}\left(\mathcal{D}, N \right),\chi_N, 2N \right)$.
                \IF{$f = 1$}
                    \RETURN $(j, f)$.
                \ENDIF
            \ENDFOR
            \STATE \RETURN $(0, 0)$.
        \end{algorithmic}
    \end{algorithm}

   Note that in Algorithm \ref{alg:qtsearch} we are passing the quantum circuit that
   computes $\mathbf{Weak\_Search\_Auxiliary}\left(\mathcal{D}, N\right)$ as an oracle to $\mathbf{QSearch}'$.
   
   Let $\mathcal{D}$, $N$, $F$ be given.

   Let $L$ be the constant defined in line \ref{pseudo:qsearch1_L} of Algorithm \ref{alg:qsearch1}.
   Let $Z_0, \dots, Z_{4L-1}$ and $B_0, \dots, B_{4L-1}$ be random variables 
   for each measured $\bk{z, b}$
   (line \ref{pseudo:qsearch1_z0} and \ref{pseudo:qsearch1_z1}).
   Note that there are $2 \cdot 2L$ indices for the random variables, because
   we are computing $\mathbf{QSearch'}$ twice.

   Let $F'$ and $J'$ be the random variable for the final output
   of $\mathbf{Weak\_Search}$. Then,
   $$
      \mathbb{P}[F' = 1 \land F(J') = 0]
      \le \sum_{j=0}^{4L-1} \mathbb{P}[F(Z_j) = 0 \land B_j = 1] 
      \le \sum_{j=0}^{4L-1} N^{-\lambda}
      = (4L-1)N^{-\lambda} 
    $$

   Using the fact that $L \le \gamma \sqrt{N}$
   for large enough $N \ge 2$,
   we have 
   $$
      \mathbb{P}[F' = 1 \land F(J') = 0]
      \le 4L N^{-\lambda} 
      \le \gamma 4 N^{-\lambda+\frac{1}{2}} 
      \le 4 \gamma 2^{-\lambda + \frac{1}{2}} 
      \le \frac{1}{9}
    $$

   In other words, we have shown that
   $
      \sum_{j \in F^{-1}(\{0\})} \left|\langle j, 1, 0^y | \mathcal{B} | 0, 0, 0^y \rangle \right|^2
      \le \frac{1}{9}
      \le
      \frac{1}{3}.
      $
   
   Now, suppose that $F^{-1}(\{1\}) \ne \varnothing$.
   Let $J$ and $B$ be random variables denoting the $j$ and $b$ from Algorithm \ref{alg:qtsearch0}.
   Then
   $$
      \mathbb{P}[B = 1]
      \ge
      \mathbb{P}[F(J) = 1 \land B = 1] 
      = \mathbb{P}[F(J) = 1] \cdot \mathbb{P}[B = 1 | F(J) = 1] 
      \ge \frac{1}{N} \cdot \left( 1 - N^{-\lambda} \right)
    $$
   For $N \ge 2$, $N^{-\lambda} \le N^{-4} \le \frac{1}{16}$ 
   and thus 
   $
      \mathbb{P}[B = 1] \ge \frac{1}{N} \left(  1 - N^{-\lambda} \right)
      \ge
      \frac{15}{16N}
      \ge
      \frac{1}{2N}
    $

   So the $a$ (the success probability) for $\mathbf{Weak\_Search\_Auxiliary}$ is bounded below
   by $\frac{1}{2N}$ (when $N \ge 2$). Since we are repeating $\mathbf{QSearch}'$ twice,
   we have $\mathbb{P}[F' = 1] \ge 1 - \frac{1}{3} \cdot \frac{1}{3} = \frac{8}{9}$ due to
   Theorem \ref{thm:amplitude_amplification}.

   Using the fact that $\mathbb{P}[F' = 1 \land F(J') = 0] \le \frac{1}{9}$,
   we get
   \begin{align*}
      \frac{8}{9} 
      &\le \mathbb{P}[F' = 1] 
      = \mathbb{P}\left[F' = 1 \land F(J') = 0\right] + \mathbb{P}[F' = 1 \land F(J') \in \{1, 2\}] 
      \le \frac{1}{9} + \mathbb{P}\left[F' = 1 \land F(J') \in \{1, 2\}\right] \\
      \frac{7}{9} 
      &\le \mathbb{P}[F' = 1 \land F(J') \in \{1, 2\}]
   \end{align*}

   In other words, if $F^{-1}\left( \{ 1 \} \right) \ne \varnothing$, then
   $
      \sum_{j \in F^{-1}(\{ 1, 2 \})} \left| \langle j, 1, 0^y | \mathcal{B} | 0, 0, 0^y \rangle \right|^2
      \ge \frac{7}{9}
      \ge \frac{2}{3} .
    $
\end{proof}

\subsection{Approximate Bounded Hamming Distance Pattern Matching}

The following is a generalization of Lemma 3.12 from \cite{kociumaka2024quantum} and its proof.
\begin{theorem}
   \label{thm:epsilon_k_distance}
   There is a quantum algorithm that,
   given oracle access to two strings $X$ and $Y$ of equal length $|X| = |Y| = m$,
   an integer threshold $k > 0$, and $\epsilon \in (0, 1]$,
   outputs YES ($1$) or NO ($0$) so that
   \begin{itemize}
      \item If $\delta_H(X, Y) \le k$, then the algorithm outputs YES with probability
         of at least $9/10$.
      \item If $\delta_H(X, Y) > (1 + \epsilon)k$, then the algorithm outputs NO with probability
         of at least $9/10$.
   \end{itemize}
   This algorithm takes $\tilde{O}\left(\epsilon^{-1} \sqrt{m / k}\right)$ quantum time.
\end{theorem}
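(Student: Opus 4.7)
My plan is to reduce to a single invocation of quantum approximate counting (Theorem~\ref{thm:qae_counting}) applied to the mismatch indicator. I would define $f : [0..N-1] \to \mathbb{B}$ by $f(i) = 1$ iff $i < m$ and $X[i] \neq Y[i]$, where $N = 2^{\lceil \log_2 m \rceil} = \Theta(m)$; then $t := |f^{-1}(1)| = \delta_H(X, Y)$, and each evaluation of $f$ costs $\tilde{O}(1)$ via the oracles for $X$ and $Y$. I would then call $\mathbf{Count}(f, M)$ with $M = C \epsilon^{-1} \sqrt{m/k}$ for a sufficiently large absolute constant $C$, picking the internal parameter of Theorem~\ref{thm:qae_counting} (renamed $c$ to avoid clashing with our threshold $k$) to be a constant such as $10$, which drives the failure probability below $1/18 \le 1/10$. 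The decision rule is simply: output YES if the returned estimate $t'$ satisfies $t' \le (1+\epsilon/2)k$, and NO otherwise.

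The bulk of the argument is controlling the counting error and performing a case analysis against the threshold $(1+\epsilon/2)k$. Substituting $M$ into Theorem~\ref{thm:qae_counting} and bounding $\sqrt{t(N-t)} \le \sqrt{tN} = O(\sqrt{tm})$ would yield
\[
|t' - t| \;=\; O\!\left(\frac{\epsilon \sqrt{tk}}{C}\right) + O\!\left(\frac{\epsilon^2 k}{C^2}\right).
\]
I would split into two regimes. In the small-$t$ regime $t \le 4k$, bounding $\sqrt{tk} \le 2k$ collapses the right-hand side to $|t'-t| \le \epsilon k / 3$ for $C$ large enough, which cleanly separates $t \le k$ from $t > (1+\epsilon)k$ about the cut $(1+\epsilon/2)k$. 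In the large-$t$ regime $t > 4k$, I would use $\sqrt{tk} < t/2$ to turn the first term into $O(\epsilon t / C)$, so that for $C$ large enough $t' \ge t - \epsilon t / 10 \ge (9/10)\, t > 3.6 k$, which for $\epsilon \in (0,1]$ is comfortably above $(1+\epsilon/2) k \le 1.5 k$, so NO is output.

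The runtime is $\tilde{O}(M) = \tilde{O}(\epsilon^{-1}\sqrt{m/k})$ oracle calls, matching the claim. The only nontrivial step is the large-$t$ regime: since the counting error scales as $\sqrt{t(N-t)}/M$ rather than $\sqrt{kN}/M$, the raw absolute error blows up as $t$ grows, and one might worry that the estimate $t'$ could drift below the acceptance cutoff for $t$ close to $m$. The resolution I would emphasize is that once $t \gtrsim k$ the ``absolute'' error is already absorbed into a relative error of $O(\epsilon)$ on $t$ itself, and that relative slack is more than enough to keep $t'$ above $(1+\epsilon/2)k$. Everything else (padding $m$ up to a power of two, checking that one call of $\mathbf{Count}$ suffices for probability $9/10$, and composing $f$ from the $X,Y$ oracles) is routine.
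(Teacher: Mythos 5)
Your proposal is correct and follows the same high-level structure as the paper: a single call to the $\mathbf{Count}$ primitive (Theorem~\ref{thm:qae_counting}) with $M = \Theta(\epsilon^{-1}\sqrt{m/k})$, thresholding the returned estimate against $(1+\epsilon/2)k$, and boosting the failure probability below $1/10$ via the internal accuracy parameter. The substantive difference lies in how the separation of the two promise cases is established. The paper hard-codes $M = \left\lceil 6\pi\sqrt{N/k}\,/\,(\sqrt{1+3\epsilon/2}-\sqrt{1+\epsilon})\right\rceil$; the particular constant $\beta = \sqrt{1+3\epsilon/2}-\sqrt{1+\epsilon}$ satisfies $\beta^2 + 2\beta\sqrt{1+\epsilon} = \epsilon/2$, which makes the additive and cross terms in the counting error bound cancel exactly to the target gap of $(\epsilon/2)k$, so the $t \le k$ and $t > (1+\epsilon)k$ cases are each handled by a single chain of inequalities valid for all such $t$ (including $t$ near $m$, where monotonicity of $t \mapsto t - 2\beta\sqrt{kt}$ is what keeps the estimate from drifting). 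You instead take a generic large constant $C$ and split on $t \lessgtr 4k$: for $t \le 4k$ you bound the error by an absolute $O(\epsilon k)$, and for $t > 4k$ you reinterpret the $\sqrt{t(N-t)}/M$ term as a $O(\epsilon)$ relative error on $t$, so $t' \ge (9/10)t \gg (1+\epsilon/2)k$. That is a genuinely different (and arguably more transparent) way to control the counting error for large $t$, trading the paper's sharp constants for a robust two-regime argument. Both are sound; one small omission in your sketch is the paper's short-circuit for $k \ge m$ (return YES immediately), but your analysis actually covers that case without it, since there $t \le m \le k$ always lands in the small-$t$ regime.
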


\begin{proof}
   First, we present the quantum algorithm (Algorithm \ref{alg:approx_bounded_decider}).
   \begin{algorithm}
      \caption{$\mathbf{ApproxBoundedHammingDecider} (X, Y, k, \epsilon)$}\label{alg:approx_bounded_decider}
      \begin{algorithmic}[1]
         \STATE Set $m \leftarrow |X|$.
         \STATE Set $N \leftarrow \min\{ 2^j : j \in \mathbb{N} \land 2^j \ge m \}$.
         \STATE \textbf{procedure} \textsc{F}(j)
         \STATE \hspace{\algorithmicindent} \textbf{return} $j < m \land X_j \ne Y_j$.
         \STATE \textbf{end procedure}
         \STATE Set $M \leftarrow \left\lceil \frac{6\pi\sqrt{N/k}}{\sqrt{1+3\epsilon/2} - \sqrt{\epsilon}} \right\rceil$.
         \IF{$k \ge m$}
            \RETURN $1$
         \ELSE
         \STATE Set $t' \leftarrow \mathbf{Count}(\textsc{F}, M)$.
            \RETURN $t' < \left( 1 + \frac{\epsilon}{2} \right) k$.
         \ENDIF
      \end{algorithmic}
   \end{algorithm}

   If $k \ge m$, then the algorithm correctly returns YES (or $1$, to be precise).
   Otherwise, the algorithm outputs YES if and only if
   $t' \le \left(1 + \frac{\epsilon}{2}\right)k$.

   For the rest of the proof, assume that $k < m$.

   Instead of using Theorem \ref{thm:qae_counting}
   with parameters $\left(\left\lceil 48\pi \sqrt{N/k} \right\rceil, 6\right)$ as done in \cite{kociumaka2024quantum},
   we use parameters 
   $$ \left( \left\lceil \frac{6\pi \sqrt{N / k}}{\sqrt{1 + 3\epsilon/2} - \sqrt{1 + \epsilon}} \right\rceil, 6 \right) $$
   and with $F$ as the Boolean predicate. 

   Let $\beta = \sqrt{1 + 3\epsilon/2} - \sqrt{1 + \epsilon}$
   and $\alpha = 6\pi/\beta$. Then our first parameter is $M = \left\lceil \alpha \sqrt{N / k} \right\rceil$.
   Calculating, we get
   $
      \beta^2 + 2 \beta \sqrt{1 + \epsilon} = \frac{\epsilon}{2}
    $.
   Let $t$ denote the actual number of mismatches and let $t'$
   be a possible output by the counting algorithm. By Theorem 
   \ref{thm:qae_counting},
   we have
   \begin{align*}
      |t' - t| \le 12\pi \frac{\sqrt{t(N - t)}}{M} + \frac{36\pi^2 N}{M^2}
      \le 12\pi \frac{\sqrt{tN}}{M} + \frac{36\pi^2 N}{M^2}
   \end{align*}

   We shall show that if $t \le k$ then $t' \le (1 + \epsilon/2)k$
   and if $t > (1 + \epsilon)k$ then $t' > (1 + \epsilon/2)k$.

   First, suppose that $t \le k$. Then,
   \begin{align*}
      t' &\le t + 12\pi \frac{\sqrt{tN}}{\left\lceil \alpha \sqrt{\frac{N}{k}} \right\rceil} + 36\pi^2 \frac{N}{\left\lceil \frac{\alpha N}{k} \right\rceil^2}
         \le t + 12\pi \frac{\sqrt{tN}}{\alpha \sqrt{\frac{N}{k}}} + 36\pi^2 \frac{N}{\left( \frac{\alpha^2 N}{k} \right)} 
         = k + 12\pi \frac{\sqrt{tN}}{\alpha \sqrt{\frac{N}{k}}} + (6\pi/\alpha)^2 k \\
         &= k + 2\beta \sqrt{kt} + \beta^2 k
         \le \left( 1 + 2\beta + \beta^2 \right) k 
         = \left( 1 + 2\beta \sqrt{1 + \epsilon} + \beta^2 \right) k - 2\beta \left( \sqrt{1 + \epsilon} - 1 \right) k \\
         &< \left(1 + 2\beta\sqrt{1 + \epsilon} + \beta^2 \right) k 
         = (1 + \epsilon/2)k
   \end{align*}

   Now, suppose that $t > (1 + \epsilon)k$. Then,
   \begin{align*}
      t' 
      &\ge t - \left( 12\pi \frac{\sqrt{tN}}{M} + \frac{36\pi^2 N}{M^2} \right) 
      \ge t - \left( 12\pi \frac{\sqrt{tN}}{\left\lceil \alpha \sqrt{\frac{N}{k}} \right\rceil} + 36\pi^2 \frac{N}{\left\lceil \frac{\alpha N}{k} \right\rceil^2} \right) 
      \ge t - \left(12\pi \frac{\sqrt{tN}}{\alpha \sqrt{\frac{N}{k}}} + 36\pi^2 \frac{N}{\left( \frac{\alpha^2 N}{k} \right)} \right) \\
      &\ge t - \left(2\beta \sqrt{kt} + \beta^2 k\right) 
      = \sqrt{kt} \left(\sqrt{\frac{t}{k}} - 2\beta\right) - \beta^2 k 
      > \sqrt{k^2 (1 + \epsilon)} \left(\sqrt{1 + \epsilon} - 2\beta\right) - \beta^2 k \\
      &= (1 + \epsilon) k - 2\beta k\sqrt{1 + \epsilon} - \beta^2 k 
      = (1 + \epsilon)k - \left(2\beta\sqrt{1 + \epsilon} - \beta^2\right)k 
      = (1 + \epsilon)k - \frac{\epsilon}{2}k \\
      &\ge (1 + \epsilon/2)k
   \end{align*}

   So, using Theorem \ref{thm:qae_counting}
   with parameters $(M, 6)$ gives correct result with probability of at least
   $1 - 1/(2(6-1)) = 9/10$.

   Finally, we analyze the time complexity of this algorithm.
   From Theorem \ref{thm:qae_counting}, we know that 
   our algorithm queries $X$ and $Y$ at most $\tilde{O}(M)$ times.
   \begin{align*}
      M &= \left\lceil \frac{6\pi \sqrt{N / k}}{\sqrt{1 + 3\epsilon/2} - \sqrt{1 + \epsilon}} \right\rceil 
        \le 1 +  \frac{6\pi \sqrt{N / k}}{\sqrt{1 + 3\epsilon/2} - \sqrt{1 + \epsilon}} 
        \le 1 + \frac{6\pi \sqrt{2m / k}}{\sqrt{1 + 3\epsilon/2} - \sqrt{1 + \epsilon}} \\
        &\le 1 + \frac{6\pi \sqrt{2} \sqrt{m / k}}{\sqrt{1 + 3\epsilon/2} - \sqrt{1 + \epsilon}} 
        = O\left( \frac{\sqrt{\frac{m}{k}}}{\sqrt{1 + \frac{3}{2}\epsilon} - \sqrt{1 + \epsilon}} \right)
   \end{align*}

   A little algebra shows that $\frac{1}{\beta} \le 6\epsilon^{-1}$
   because $0 < \epsilon \le 1$:
   \begin{align*}
      \frac{1}{2}\epsilon
      &=
      \left( 1 + \frac{3}{2}\epsilon \right) - \left(1 + \epsilon\right) 
      =
      \left( \sqrt{1 + \frac{3}{2}\epsilon} + \sqrt{1 + \epsilon} \right)
      \left( \sqrt{1 + \frac{3}{2}\epsilon} - \sqrt{1 + \epsilon} \right) \\
      &\le
      \left(\sqrt{\frac{5}{2}} + \sqrt{2} \right)
      \left( \sqrt{1 + \frac{3}{2}\epsilon} - \sqrt{1 + \epsilon} \right) 
      \le
      3 \left( \sqrt{1 + \frac{3}{2}\epsilon} - \sqrt{1 + \epsilon} \right)  \\
      \frac{\epsilon}{6}
      &\le
      \sqrt{1 + \frac{3}{2}\epsilon} - \sqrt{1 + \epsilon} \\
      6\epsilon^{-1}
      &\ge
      \frac{1}{\sqrt{1 + \frac{3}{2}\epsilon} - \sqrt{1 + \epsilon}}
   \end{align*}

   Thus, the complexity 
   of the overall algorithm becomes $\tilde{O}\left( \epsilon^{-1} \sqrt{\frac{m}{k}} \right)$.
\end{proof}

Finally, we reach our main result.

\begin{theorem}
   \label{thm:approximate_epsilon}
   There exists a quantum algorithm that, given oracle access to 
   a pattern $P$ of length $m$ and
   a text $T$ of length $n$, an integer threshold $k > 0$, and
   $\epsilon \in (0, 1]$,
   such that:
   \begin{itemize}
      \item if there exists an $j \in [0..n-m]$ such that $\delta_H(T[j..j+m), P) \le k$, then 
         the algorithm, upon measurement, 
         outputs $(j', 1)$ for some $j' \in [0..n-m]$ satisfying 
         $\delta_H(T[j'..j'+m), P) \le (1 + \epsilon)k$ with a probability of at least $2/3$;
      \item if, for all $j \in [0..n-m]$, we have $\delta_H(T[j..j+m), P) > (1 + \epsilon)k$, then
         the algorithm, upon measurement, outputs $(j', 0)$ for some $j' \in [0..2n-1]$ 
         with probability of at least $2/3$.
   \end{itemize}
   This algorithm has time complexity $\tilde{O} \left( \epsilon^{-1} \sqrt{\frac{mn}{k}} \right)$ 
   (assuming that $P$ and $T$ can be accessed in $\tilde{O}\left(1\right)$ time).
\end{theorem}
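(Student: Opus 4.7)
The plan is to combine the approximate Hamming decider of Theorem \ref{thm:epsilon_k_distance} with the Weak Search framework of the preceding subsection. The crucial observation is that the gap between $k$ and $(1+\epsilon)k$ corresponds exactly to the notion of ``neutral'' inputs in Weak Search, so the two results slot together without any additional work on the string side.

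First I would set $N$ to be the smallest power of two with $N \ge n$, so that $N \le 2n$ and every index lies in $[0..2n-1]$ as promised by the statement. Then I would construct a unitary circuit $\mathcal{D}$ that, on input $j \in [0..N-1]$, runs $\mathbf{ApproxBoundedHammingDecider}(T[j..j+m), P, k, \epsilon)$ when $j \in [0..n-m]$ and writes its boolean answer into an output register, and writes a deterministic $0$ otherwise. The internal measurements of the counting-based decider are deferred so that $\mathcal{D}$ is purely unitary, as required by Weak Search. I would then define $F : [0..N-1] \to \{0, 1, 2\}$ by $F(j) = 1$ when $j \in [0..n-m]$ and $\delta_H(T[j..j+m), P) \le k$, $F(j) = 2$ when $j \in [0..n-m]$ and $k < \delta_H(T[j..j+m), P) \le (1+\epsilon)k$, and $F(j) = 0$ otherwise (so the padding indices $j > n-m$ as well as the genuine strict $(1+\epsilon)k$-mismatches are all classified as negative). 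By Theorem \ref{thm:epsilon_k_distance}, whose $9/10$ success probability comfortably exceeds the $2/3$ required by Weak Search, $\mathcal{D}$ satisfies the hypothesis of Weak Search relative to this $F$ (any constant amount of boosting of the amplitude to the level prescribed in the Weak Search statement can be absorbed into the $\tilde{O}$).

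The whole algorithm is then $\mathbf{Weak\_Search}(\mathcal{D}, N)$. In the yes-instance case a $k$-mismatch exists, so $F^{-1}(\{1\}) \ne \varnothing$ and Weak Search returns some $(j', 1)$ with $F(j') \in \{1, 2\}$ with probability at least $2/3$; every such $j'$ lies in $[0..n-m]$ and satisfies $\delta_H(T[j'..j'+m), P) \le (1+\epsilon)k$, as required. In the no-instance case every position with a valid alignment lies in $F^{-1}(\{0\})$, and Weak Search reports any $F^{-1}(\{0\})$ position with probability at most $1/3$, so the output is $(j', 0)$ with probability at least $2/3$. The cost is the product of the two theorems: Weak Search performs $\tilde{O}(\sqrt{N}) = \tilde{O}(\sqrt{n})$ queries to $\mathcal{D}$, each of which costs $\tilde{O}(\epsilon^{-1}\sqrt{m/k})$ by Theorem \ref{thm:epsilon_k_distance}, yielding the bound $\tilde{O}(\epsilon^{-1}\sqrt{mn/k})$.

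The step I would expect to require the most care (rather than a deep obstacle) is the padding bookkeeping: positions $j \in (n-m..N-1]$ must be forced into the negative class of $F$ by having $\mathcal{D}$ deterministically reject them, so that Weak Search's guarantees translate cleanly and we do not accidentally return an invalid alignment. Once that is in place, correctness is a direct invocation of the two earlier theorems and the complexity bound drops out by multiplication.
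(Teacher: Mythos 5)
Your proposal is correct and follows essentially the same route as the paper: you wrap the positions $j \in [0..N-1]$ (with $N$ a power of two of size $O(n)$, padding invalid $j$ to deterministic rejection), define $F$ with $\{0,1,2\}$ classes exactly as the paper does, feed the $\mathbf{ApproxBoundedHammingDecider}$-based circuit into $\mathbf{Weak\_Search}$, and multiply the per-query cost by the $\tilde{O}(\sqrt{N})$ query count. The only cosmetic difference is that the paper takes $N$ to be the smallest power of two at least $n-m+1$ rather than at least $n$, which changes nothing in the asymptotics or correctness.
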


\begin{proof}
   First, we present the quantum algorithm (Algorithm \ref{alg:approx_bounded_matching}):
   \begin{algorithm}
      \caption{$\mathbf{ApproxBoundedDistMatching} (T, P, k, \epsilon)$}\label{alg:approx_bounded_matching}
      \begin{algorithmic}[1]
         \STATE Set $n \leftarrow |T|$, $m \leftarrow |P|$.
         \STATE Set $N \leftarrow \min\{ 2^j : j \in \mathbb{N} \land 2^j \ge n - m + 1 \}$.
         \STATE \textbf{procedure} \textsc{Decider}(j)
         \STATE \hspace{\algorithmicindent} \textbf{if} $j > n - m$ \textbf{then}
         \STATE  \hspace{2\algorithmicindent} \textbf{return} $0$.
         \STATE \hspace{\algorithmicindent} \textbf{else}
         \STATE  \hspace{2\algorithmicindent}  \textbf{return} $\mathbf{ApproxBoundedHammingDecider}(T[j..j+m), P, k, \epsilon)$.
         \STATE \hspace{\algorithmicindent}\textbf{end if}
        \STATE \textbf{end procedure}
        \RETURN $\mathbf{Weak\_Search}(\textsc{Decider}, N)$.
      \end{algorithmic}
   \end{algorithm}

   Define $F : [0..N-1] \rightarrow \{ 0, 1, 2 \}$
   by letting, for $j \in [0..N-1]$,
   \begin{align*}
      F(j) = \begin{cases}
         0 & j > n - m \lor \delta_H(T[j..j+m-1], P) > (1 + \epsilon)k \\
         1 & \delta_H(T[j..j+m-1], P) \le k  \\
         2 & \text{otherwise}
      \end{cases}
   \end{align*}.

   From Theorem \ref{thm:epsilon_k_distance}, it is clear that
   for $j \in [0..N-1]$, $F(j) = 1$ implies that $\textsc{Decider}$
   returns $1$ with probability of at least $2/3$ and $F(j) = 0$ implies that
   $\mathcal{D}$ returns $0$ with probability of at least $2/3$.

   Thus, applying Algorithm \ref{alg:qtsearch}
   with $\textsc{Decider}$ and $F$,
   we get our desired quantum algorithm with 
   time complexity $\tilde{O}\left(\epsilon^{-1} \sqrt{\frac{mn}{k}}\right)$.
\end{proof}

\section{Further Direction}

What we have done is, simply speaking, just optimized bruteforce. 
There are methods shown in \cite{jin2022quantum} and \cite{kociumaka2024quantum}
to reduce the search space with $\tilde{O}\left( \sqrt{kn} \right)$-time preprocessing.
When $k = \Theta(m)$, using this slows down our algorithm.
As we are dealing with an additional approximation factor $\epsilon$,
can it be possible to bring the pre-processing time down?

\begin{acks}
    To Hasib sir and our mothers.
\end{acks}

\bibliographystyle{ACM-Reference-Format}
\bibliography{references}

@String{Computer = "{IEEE} Computer" }

@misc{Brassard_2002,
   title={Quantum amplitude amplification and estimation},
   ISSN={0271-4132},
   url={http://dx.doi.org/10.1090/conm/305/05215},
   DOI={10.1090/conm/305/05215},
   journal={Quantum Computation and Information},
   publisher={American Mathematical Society},
   author={Brassard, Gilles and Høyer, Peter and Mosca, Michele and Tapp, Alain},
   year={2002},
   pages={53–74} }

@misc{kociumaka2024quantum,
   title={Near-Optimal-Time Quantum Algorithms for Approximate Pattern Matching},
   author={Tomasz Kociumaka and Jakob Nogler and Philip and Wellnitz},
   year={2024},
   eprint={2410.06808},
   archivePrefx={arXiv},
   primaryClass={cs.DS}
}

@misc{jin2022quantum,
      title={Quantum Speed-ups for String Synchronizing Sets, Longest Common Substring, and k-mismatch Matching},
      author={Ce Jin and Jakob Nogler},
      year={2022},
      eprint={2211.15945},
      archivePrefix={arXiv},
      primaryClass={cs.DS}
}

@misc{akmal2021nearoptimal,
      title={Near-Optimal Quantum Algorithms for String Problems}, 
      author={Shyan Akmal and Ce Jin},
      year={2021},
      eprint={2110.09696},
      archivePrefix={arXiv},
      primaryClass={cs.DS}
}

@misc{abboud2015quadratictime,
      title={Quadratic-Time Hardness of LCS and other Sequence Similarity Measures}, 
      author={Amir Abboud and Arturs Backurs and Virginia Vassilevska Williams},
      year={2015},
      eprint={1501.07053},
      archivePrefix={arXiv},
      primaryClass={cs.CC}
}

@InProceedings{legall_et_al.2022.97,
  author =	{Le Gall, Fran\c{c}ois and Seddighin, Saeed},
  title =	{{Quantum Meets Fine-Grained Complexity: Sublinear Time Quantum Algorithms for String Problems}},
  booktitle =	{13th Innovations in Theoretical Computer Science Conference (ITCS 2022)},
  pages =	{97:1--97:23},
  series =	{Leibniz International Proceedings in Informatics (LIPIcs)},
  ISBN =	{978-3-95977-217-4},
  ISSN =	{1868-8969},
  year =	{2022},
  volume =	{215},
  editor =	{Braverman, Mark},
  publisher =	{Schloss Dagstuhl -- Leibniz-Zentrum f{\"u}r Informatik},
  address =	{Dagstuhl, Germany},
  URL =		{https://drops.dagstuhl.de/entities/document/10.4230/LIPIcs.ITCS.2022.97},
  URN =		{urn:nbn:de:0030-drops-156934},
  doi =		{10.4230/LIPIcs.ITCS.2022.97},
  annote =	{Keywords: Longest common substring, Longest palindrome substring, Quantum algorithms, Sublinear algorithms}
}

@article{LANDAU1986239,
title = {Efficient string matching with k mismatches},
journal = {Theoretical Computer Science},
volume = {43},
pages = {239-249},
year = {1986},
issn = {0304-3975},
doi = {https://doi.org/10.1016/0304-3975(86)90178-7},
url = {https://www.sciencedirect.com/science/article/pii/0304397586901787},
author = {Gad M. Landau and Uzi Vishkin},
abstract = {Given a text of length n, a pattern of length m, and an integer k, we present an algorithm for finding all occurrences of the pattern in the text, each with at most k mismatches. The algorithm runs in O(k(m log m + n)) time.}
}

%

\end{document}